\newtheorem{lemma}{Lemma}
\renewcommand{\thefootnote}{\arabic{footnote}}
  \author{Emre Arslan,~\IEEEmembership{Student Member,~IEEE,} Fatih Kilinc,~\IEEEmembership{Student Member,~IEEE,} Sultangali Arzykulov,~\IEEEmembership{Member,~IEEE,} \hspace*{25pt} Ali Tugberk Dogukan,~\IEEEmembership{Student Member,~IEEE,} Abdulkadir Celik,~\IEEEmembership{Senior Member,~IEEE,} \newline  Ertugrul Basar,~\IEEEmembership{Senior Member,~IEEE,} and Ahmad M. Eltawil,~\IEEEmembership{Senior Member,~IEEE}

\thanks{E. Arslan, F. Kilinc, A. T. Dogukan and E. Basar are with the Communications Research and Innovation Laboratory (CoreLab), Department of Electrical and Electronics Engineering, Ko\c{c} University, Sariyer 34450, Istanbul, Turkey. (e-mail: \{earslan18, fkilinc20, adogukan18, ebasar\}@ku.edu.tr).}
\thanks{S. Arzykulov, A. Celik and A. M. Eltawil are with Computer,
Electrical, and Mathematical Sciences \& Engineering (CEMSE) Division at King Abdullah University of Science and Technology (KAUST),
Thuwal, KSA 23955-6900 (e-mail: \{sultangali.arzykulov, abdulkadir.celik, ahmed.eltawil\}@kaust.edu.sa).}

}
\begin{document}
	%\title{RIS-Enabled Over-the-Air Uplink NOMA}
	\title{Reconfigurable Intelligent Surface Enabled Over-the-Air Uplink Non-orthogonal Multiple Access}
\maketitle	
	\begin{abstract}
Innovative reconfigurable intelligent surface (RIS) technologies are rising and recognized as
promising candidates to enhance 6G and beyond wireless communication systems. RISs acquire the ability to manipulate electromagnetic signals, thus, offering a degree of control over the wireless channel and the potential for many more benefits. Furthermore, active RIS
designs have recently been introduced to combat the critical double fading problem and other impairments passive RIS designs may possess. In this paper, the potential and flexibility
of active RIS technology are exploited for uplink systems to achieve virtual non-orthogonal multiple access (NOMA) through power disparity over-the-air rather than controlling transmit powers at the user side. Specifically, users with identical transmit
power, path loss, and distance can communicate with a base station sharing time and frequency resources in a NOMA fashion with the aid of the proposed hybrid RIS system. Here, the RIS is partitioned into active and passive parts and the distinctive partitions serve different users aligning their phases accordingly while introducing a power difference to the users’
signals to enable NOMA. First, the end-to-end system model is presented considering two users. Furthermore, outage probability calculations and theoretical error probability analysis are discussed and reinforced with computer simulation results.

	\end{abstract}
	\begin{IEEEkeywords}
	Reconfigurable intelligent surface (RIS), non-orthogonal multiple access (NOMA), 6G, active RIS, amplifier,  optimization, outage probability, power allocation, smart reflect-array, uplink.
	\end{IEEEkeywords}

	\IEEEpeerreviewmaketitle
		\IEEEpubidadjcol
		
	\renewcommand{\thefootnote}{\fnsymbol{footnote}}
%\newcommand\footnoterule{\kern-3pt \hrule width 2in \kern 2.6pt}
	% \vspace{-0.3cm}
	\section{Introduction}
\IEEEPARstart{W}{ireless} communication systems are evolving rapidly in all communication layers, especially in the physical and MAC layers, to meet stringent 6G requirements. Specifically, in the physical layer, promising approaches and advancements have been investigated to enhance wireless communication quality and performance \cite{9144301,8732419,sahinofdm}. In general, the environment and channels in these wireless communication systems have been considered uncontrollable, which may lead to performance degradation by negatively affecting the communication efficiency, quality-of-service (QoS), and more \cite{basar2019wireless,zhang2021active}.

Recent advancements in meta materials have led to reconfigurable intelligent surfaces (RISs) to control the wireless channel passively and thus, enhance communication performance \cite{basar2019transmission,kilinc2021physical}. This popular technology has greatly attracted the attention of both academia and industry. An RIS is a large array with many cost-effective elements that can reflect, adjust, absorb and even amplify impinging signals with an adjustable reflecting coefficient \cite{9119122}. By doing so, RISs may artificially enhance existing wireless communication by suppressing interference \cite{OTAE,zhang2021spatial,interference_ref}, increasing capacity \cite{capacity_ref_1,capacity_ref_2}, extending coverage \cite{coverage_ref_1,coverage_ref_2,coverage_ref_3}, and providing energy efficiency \cite{fatih_recep_PA_paper,EE_ref,EE_ref_2}, all without the need for additional RF chains. RISs may also be used in applications such as localization, Wi-Fi, etc. Hence, the RIS technology is a promising tool to exploit in wireless communication systems. However, RISs do come with problems of their own. RISs typically work best in scenarios where direct links do not exist or are substantially weak \cite{zhang2021active, aygul2021deep, yang2020coverage}. Furthermore, a major challenge in RIS technology is the multiplicative path loss of the transmitter-RIS-receiver link, hence, making significant gains very difficult for passive RIS designs \cite{zhang2021active, wu2021intelligent, 9734027, fatih_recep_PA_paper}. 

\subsection{Related Work}

To combat the critical problem of multiplicative path loss, sacrificing from the fully passive structure, the active RIS concept has been introduced \cite{zhang2021active, 9734027}. In hopes of overcoming the high path loss problem, active RISs not only reflect the signals at the RIS but also amplify and strengthen the power of the impinging signals at the expense of additional power consumption and additional thermal noise introduced through the active elements \cite{9734027}. Active RIS technology brings additional flexibility to RIS, which is possible to achieve more than just mitigating the multiplicative path loss. By combining passive and active RIS technologies, hybrid systems may be constructed to enhance communication system performances and novel solutions \cite{wu2021intelligent,9591503}. Albeit its numerous benefits, active RIS designs also have drawbacks such as increased hardware complexity, increased power consumption and costs \cite{fatih_recep_PA_paper}. Hence, hybrid designs considering passive and active architectures are more promising. 

Passive and active RIS technologies have also been exploited and analyzed in non-orthogonal multiple access (NOMA) schemes. A simple design is proposed in \cite{ding2020simple} by employing multiple passive RISs and effectively aligning the direction of users' channel vectors to aid in NOMA. The sum rate is maximized in downlink MISO RIS-NOMA systems through joint beamforming designs at the BS and RIS. The downlink transmit power is minimized for RIS NOMA through joint optimization at the BS transmit beamformers and RIS phases. Moreover, the impact of coherent and random phase shift designs at the RIS are evaluated  \cite{mu2019exploiting,fu2019intelligent,ding2020impact}. In addition, a power domain downlink NOMA scheme is proposed by exploiting an RIS to improve the reliability of the two-user system and its bit-error rate is evaluated in \cite{thirumavalavan2020ber}. Furthermore, the impact of hardware impairments in terms of performance metrics is studied, outage probability and throughput performances are derived in \cite{9269324}. On the other hand, studies on uplink-NOMA exploiting RISs have also been conducted to enhance coverage, derive the outage probabilities and ergodic rates, and grant-free ultra-reliable uplink sensor data transmission exploiting diversity to obtain reliability \cite{cheng2021downlink,melgarejo2020reconfigurable}. Passive RIS-aided NOMA systems are prevalent in the literature; however, active and hybrid RIS-aided NOMA systems have yet to be explored, especially for uplink scenarios. As it can be concluded, RIS is a superior technology to aid NOMA wireless communication systems and is attracting the attention of researchers. However, the RIS-aided NOMA literature has yet to mature. Moreover, to the best of the authors' knowledge, an active or a hybrid RIS has not been the primary enabler of NOMA, especially in the uplink. In addition, RIS-assisted uplink-NOMA has not been considered in the literature.

\subsection{Main Contributions}
In this paper, a hybrid RIS is partitioned into two parts as active and passive parts. Benefiting from both active and passive RIS technologies, a novel system is proposed to enable uplink-NOMA over-the-air to the users, where the QoS of both users are enhanced by using a hybrid design rather than a fully active design. In conventional uplink-NOMA, the users require a sort of power allocation scheme to share the same time and frequency resources \cite{cheng2021downlink,9605575,9741239}. On the contrary, in the proposed method, the users may transmit with the same power and do not worry about interference with the confidence of the RIS  providing them their required QoS. The active part of the RIS is coherently aligned for one user and also amplifies its signal, on the other hand, the passive part is coherently aligned for the other user with no amplification in a fully passive manner. The first users signal is coherently aligned and boosted by the active part of the RIS to enable a certain power disparity from the other user, enabling NOMA and SIC at the BS. The passive part of the RIS requires no amplification since a power difference is already obtained by the active part, however, it is aligned for the second user to improve its signal-to-interference and noise ratio (SINR) as in conventional passive RIS methods. 

The major contributions of this paper are listed as follows:
\begin{itemize}
        \item A novel communication system is proposed to virtually enable uplink-NOMA over-the-air with the aid of a single hybrid RIS.
	    \item A thorough end-to-end system model of the proposed scheme is provided and analyzed along with computer simulations.
	    \item The outage probability is derived and calculated for different configurations and confirmed with simulations. 
	    \item A theoretical error probability analysis is provided along with computer simulation results for varying parameters and scenarios considering different successive interference cancellation (SIC) performances.
	    \item Two scenarios for a fixed active RIS gain case and an optimized RIS gain case are presented. The performances of both cases are compared and discussed.
	\end{itemize}

\subsection{Paper Organizations}
The rest of the paper is organized as follows, Section II presents the end-to-end system model of the proposed scheme. Section III presents a theoretical analysis of the outage probability for different scenarios. Section IV discusses the RIS power consumption and amplification factor. Furthermore, computer simulation results are provided and discussed in Section V . Finally, the paper is concluded with the final remarks in Section VI.

\section{End-To-End System Model}

\begin{figure*}[ht]
		\centering
		\includegraphics[width=0.8 \textwidth]{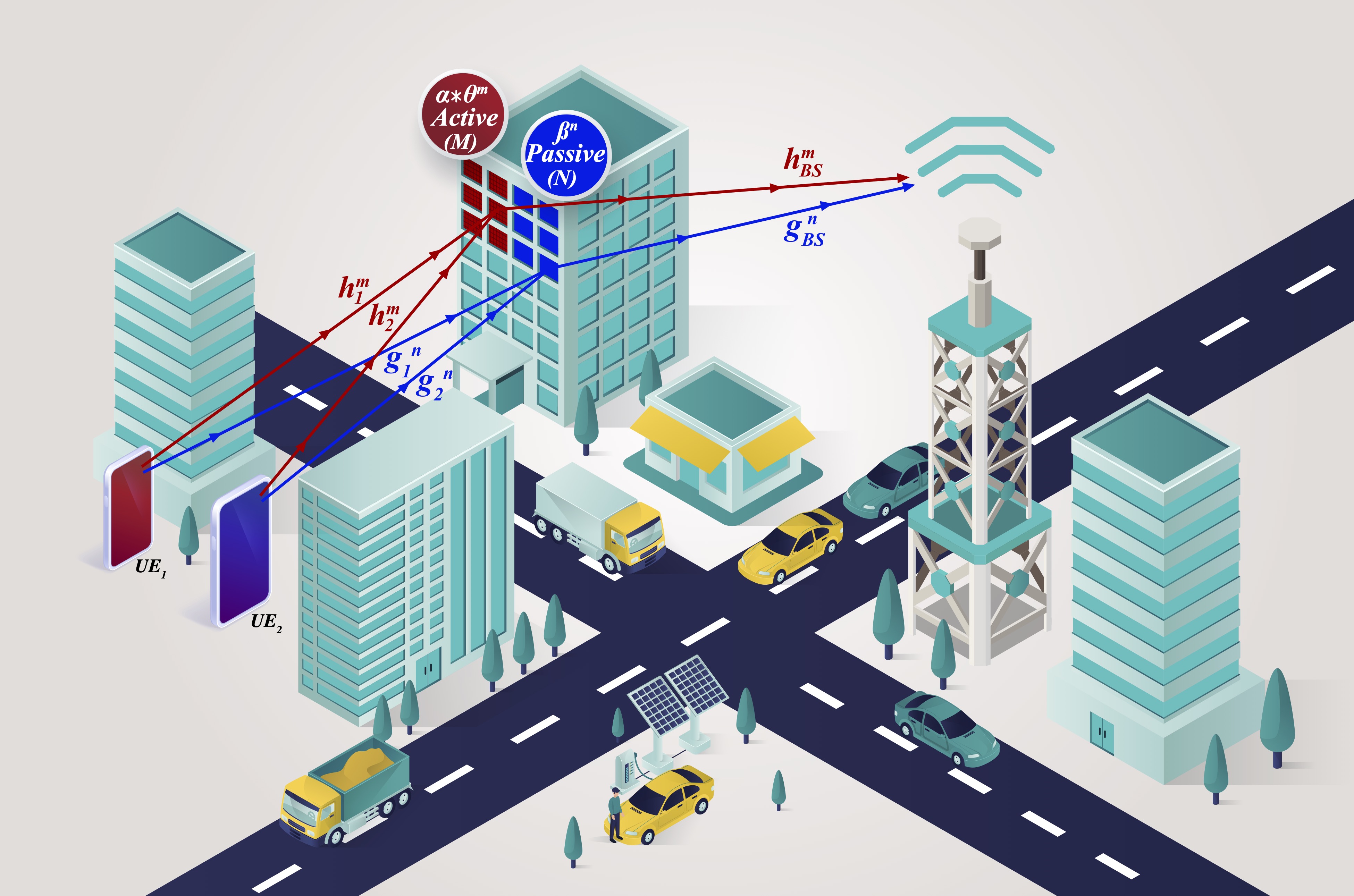} 
		\caption{ \centering Proposed over-the-air uplink-NOMA system model for $K=2$ users.}
		\label{fig:Fig1}
\end{figure*} 

This section presents the system model for the proposed over-the-air uplink-NOMA scheme. In the proposed system, we consider a two-user ($U_k, k\in{1,\cdots,K}$) network where $K=2$, uplink scenario operating under frequency-flat Rayleigh fading channels in the presence of an RIS. Both users are served through the same frequency and time slots with single-carrier signals. In conventional uplink-NOMA systems, the users' transmit power needs to be optimized to enable NOMA and different transmit powers are generally assigned to near and far users. In the proposed system, the users transmit with the same power $P_t$ without worrying about the power difference as power reception difference required by successive interference cancellation (SIC) receiver is satisfied by a hybrid RIS that is partitioned into two parts, i.e., active and passive as shown in Fig. \ref{fig:Fig1}. Assuming the CSI is available at the BS and RIS, the active part of the RIS with $M$ elements has amplifiers in every element and is coherently aligned for one user, either user 1 $(U_1)$ or user 2 $(U_2)$, while the other passive part with $N$ elements is coherently aligned for the other user. In other words, $\boldsymbol{\theta} \triangleq [e^{j\phi_1},\cdots,e^{j\phi_M}]^{\mathrm{T}} \in \mathbb{C}^{M\times 1}$ and $\boldsymbol{\beta} \triangleq [e^{j\xi_1},\cdots,e^{j\xi_N} ]^{\mathrm{T}} \in \mathbb{C}^{N\times 1}$ are the phase alignment vectors at the active and passive parts of the RIS, respectively, where $[\cdot]^{\mathrm{T}}$ represents transposition operator.

The received signal at the base station can be written as
\begingroup\makeatletter\def\f@size{9.5}\check@mathfonts
\begin{align}
    y = \sqrt{P_t}\sum_{k=1}^{K}&\left(\sqrt{\alpha}\sum_{m=1}^{M}h_k^m\theta^mh_{BS}^mx_k+\sum_{n=1}^{N}g_k^n\beta^n g_{BS}^n x_k\right) \nonumber \\ &+\sqrt{\alpha}\sum_{m=1}^{M}z^m\theta^mh_{BS}^m+w_0,
    \label{eq1}
\end{align}
\endgroup
where $x_k$, $\alpha$, $\theta^m$,  and $\beta^n$,  are the modulated PSK/QAM symbol transmitted by the $k^{th}$ user, amplification factor at the active part of the RIS, $m^{th}$ element of $\boldsymbol{\theta}$, $m={1,\cdots,M}$, and the $n^{th}$ element of $\boldsymbol{\beta}$,  $n={1,\cdots,N}$, respectively. Additionally, $z^m$ and $w_0$ stand for the noise term from the amplifier at the $m^{th}$ RIS element and additive white Gaussian noise (AWGN) sample which follows the distributions $\mathcal{CN}(0,\sigma_z^2)$ and $\mathcal{CN}(0,W_0)$, respectively. Furthermore, $h_k^m, g_k^n, h_{BS}^m$, and $g_{BS}^n$ respectively, stand for the independent and identically distributed  (i.i.d) Rayleigh fading channel coefficients of the $k^{th}$ user to the $m^{th}$ element of the active part of the RIS, $k^{th}$ user to the $n^{th}$ element of the passive part of the RIS, the $m^{th}$ element of the active part of the RIS to the BS, and the $n^{th}$ element of the passive part of the RIS to the BS.

The channel coefficients follow the distribution of $\mathcal{CN}(0,\sigma^2)$, where $\sigma^2=\frac{1}{L}$ and $L$ is the large scale path loss factor. For the first user ($U_1$), at the active part of the RIS, the $m^{th}$ element is coherently aligned as follows: $\theta^m=e^{j\phi^m}=e^{-j\angle(h_1^m h_{BS}^m)}$. On the other hand, for the second user ($U_2$), at the passive part of the RIS, the $n^{th}$ element is coherently aligned as follows: $\beta^n=e^{j\xi^n}=e^{-j\angle(g_2^ng_{BS}^n)}$. Equivalently, the received signal in (\ref{eq1}) may be rewritten as
\begin{align}
y &= \sqrt{P_t}\sum_{k=1}^{K}\left(\sqrt{\alpha}\mathbf{h}_{k}^{\mathrm{T}}\bar{\boldsymbol{\theta}} \mathbf{h}_{BS} +\mathbf{g}_{k}^{\mathrm{T}}\bar{\boldsymbol{\beta}}\mathbf{g}_{BS}\right)x_k \nonumber\\ &+\sqrt{\alpha}\mathbf{z}\bar{\boldsymbol{\theta}}\mathbf{h}_{BS}+w_0,
\label{eq2}
\end{align}
where $\mathbf{h}_{k}=[h_{k}^1,\cdots,h_{k}^M]^{\mathrm{T}}$, $\mathbf{g}_{k}=[g_{k}^1,\cdots,g_{k}^N]^{\mathrm{T}}$, $\mathbf{h}_{BS}=[h_{BS}^1,\cdots,h_{BS}^M]^{\mathrm{T}}$, and $\mathbf{g}_{BS}= [g_{BS}^1,\cdots,g_{BS}^N]^{\mathrm{T}}$ are the channel coefficient vectors between the $k^{th}$ user and the active and passive parts of the RIS, and between the RIS and BS for the active and passive parts, respectively. Furthermore, $\mathbf{z}=[z^1,\cdots,z^M]^{\mathrm{T}}$ is the noise term vector from tha active elements of the RIS. Additionally, $\bar{\boldsymbol{\theta}} = \mathrm{diag}(\boldsymbol{\theta})$ and $\bar{\boldsymbol{\beta}} = \mathrm{diag}(\boldsymbol{\beta})$, where $\mathrm{diag}(\cdot)$ is the diagonalization operator. Hence, the SINR of $U_1$ and $U_2$ may be expressed as in (\ref{eq3}), shown at the following page.
%\begin{align}
 
%& \Resize{8.5cm}{\gamma_1  =\frac{P_t(|\sqrt{\alpha}\sum_{m=1}^{M}|h_1^m||h_{BS}^m|+\sum_{n=1}^{N}g_1^n\beta^n g_{BS}^n|^2)}{P_t(|\sqrt{\alpha}\sum_{m=1}^{M} h_2^m \theta^m h_{BS}^m +\sum_{n=1}^{N}|g_2^n| |g_{BS}^n||^2)+\sigma_z^2\alpha\sum_{m=1}^{M}|\theta^mh_{BS}^m|^2+W_0}}  \\ 

%& \Resize{7.5cm}{\gamma_2  =   \frac{P_t(|\sqrt{\alpha}\sum_{m=1}^{M} h_2^m \theta^m h_{BS}^m + \sum_{n=1}^{N}|g_2^n| |g_{BS}^n||^2)}{\sigma_z^2\alpha\sum_{m=1}^{M}| \theta^m h_{BS}^m|^2+W_0}}

%\end{align}

As in classical NOMA, $U_1$ directly decodes the its signal while $U_2$ requires SIC to decode $U_1$ and subtract its own signal from the received signal. It can be seen from Fig. \ref{fig:Fig1} as well as from the SINR equations, the signals of the users imping on all of the partitions of the RIS, thus, causing interference. However, as we will later discuss, this does not affect the NOMA communication due to the coherent aligning and amplification of the user making it much greater than the user supported by the passive side which is only coherently aligned and not amplified.

\begin{figure*}[t]
    \begin{equation}
        \begin{aligned}
           &\gamma_1  =\frac{P_t(|\sqrt{\alpha}\sum_{m=1}^{M}|h_1^m||h_{BS}^m|+\sum_{n=1}^{N}g_1^n\beta^n g_{BS}^n|^2)}{P_t(|\sqrt{\alpha}\sum_{m=1}^{M} h_2^m \theta^m h_{BS}^m +\sum_{n=1}^{N}|g_2^n| |g_{BS}^n||^2)+\sigma_z^2\alpha\sum_{m=1}^{M}|\theta^mh_{BS}^m|^2+W_0}
           \\
           &\gamma_2  =   \frac{P_t(|\sqrt{\alpha}\sum_{m=1}^{M} h_2^m \theta^m h_{BS}^m + \sum_{n=1}^{N}|g_2^n| |g_{BS}^n||^2)}{\sigma_z^2\alpha\sum_{m=1}^{M}| \theta^m h_{BS}^m|^2+W_0}
           \label{eq3}
        \end{aligned}
    \end{equation}
\vspace*{2pt}
    \hrule
\end{figure*}

\section{Theoretical Outage Probability Analysis}
 
In this section, the outage probability calculations are considered. For the sake of simplicity, the SINRs of the users impinging on the active and passive parts of the RIS are reexpressed as:
 \begin{align}
&\gamma_1 =  \cfrac{P_t|A+B|^2}{P_t|C+D|^2+\sigma_z^2\alpha\sum_{m=1}^{M}|E|^2+F}, \\
&\gamma_2 =  \cfrac{P_t|C+D|^2}{\sigma_z^2\alpha\sum_{m=1}^{M}|E|^2+F}.
\label{eq4n5}
\end{align}
 where, $A=\sqrt{\alpha}\sum_{m=1}^{M}|h_1^m||h_{BS}^m|$ , $B=\sum_{n=1}^{N}g_1^n\beta^n g_{BS}^n$, $C=\sqrt{\alpha}\sum_{m=1}^{M} h_2^m \theta^m h_{BS}^m$, $D=\sum_{n=1}^{N}|g_2^n| |g_{BS}^n|$, $E=\theta^mh_{BS}^m$, and $F=W_0$. Since $A$, $B$, $C$, and $D$ terms include summation over RIS elements and since the RIS elements are sufficiently large, due to the Central Limit theorem (CLT), the channel multiplications in $A$, $B$, $C$ and $D$ lead to Gaussian distribution regardless of their distributions. Due to the CLT, all channel multiplications lead to the Gaussian distribution, regardless of what the distribution of all parts $A$, $B$, $C$, $D$ are. These distributions are summarized in Table I. 
 
 \begin{lemma}
		\label{lemma:OPA}
		The mean and variance of $A$ are obtained as $\mu_A=\frac{\sqrt{\alpha}M\pi\sigma_{h_1}\sigma_{h_{BS}}}{4}$ and $\sigma_A=\alpha M\sigma_{h_1}^2\sigma_{h_{BS}}^2(1-\frac{\pi^2}{16})$, respectively. 
	\end{lemma}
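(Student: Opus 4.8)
The plan is to recognize $A=\sqrt{\alpha}\sum_{m=1}^{M}|h_1^m||h_{BS}^m|$ as a scaled sum of $M$ i.i.d.\ terms $R_m\triangleq|h_1^m||h_{BS}^m|$, and to apply the classical formulas for the mean and variance of a sum of independent copies: $\mu_A=\sqrt{\alpha}\,M\,\mathbb{E}[R_m]$ and $\mathrm{Var}(A)=\alpha\,M\,\mathrm{Var}(R_m)$. So everything reduces to computing $\mathbb{E}[R_m]$ and $\mathbb{E}[R_m^2]$ for a single element. Here $h_1^m\sim\mathcal{CN}(0,\sigma_{h_1}^2)$ and $h_{BS}^m\sim\mathcal{CN}(0,\sigma_{h_{BS}}^2)$ are independent, so $|h_1^m|$ and $|h_{BS}^m|$ are independent Rayleigh random variables, and by independence $\mathbb{E}[R_m]=\mathbb{E}[|h_1^m|]\,\mathbb{E}[|h_{BS}^m|]$ and $\mathbb{E}[R_m^2]=\mathbb{E}[|h_1^m|^2]\,\mathbb{E}[|h_{BS}^m|^2]$.

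First I would record the standard Rayleigh moments. For a complex Gaussian $v\sim\mathcal{CN}(0,\sigma^2)$ the envelope $|v|$ is Rayleigh with $\mathbb{E}[|v|]=\tfrac{\sqrt{\pi}}{2}\,\sigma$ (using the convention that the Rayleigh scale parameter is $\sigma/\sqrt{2}$, equivalently $\mathbb{E}[|v|^2]=\sigma^2$) and $\mathbb{E}[|v|^2]=\sigma^2$. Substituting, $\mathbb{E}[R_m]=\tfrac{\sqrt{\pi}}{2}\sigma_{h_1}\cdot\tfrac{\sqrt{\pi}}{2}\sigma_{h_{BS}}=\tfrac{\pi}{4}\sigma_{h_1}\sigma_{h_{BS}}$, which gives $\mu_A=\sqrt{\alpha}\,M\,\tfrac{\pi}{4}\sigma_{h_1}\sigma_{h_{BS}}$ as claimed. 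For the variance, $\mathbb{E}[R_m^2]=\sigma_{h_1}^2\sigma_{h_{BS}}^2$, so $\mathrm{Var}(R_m)=\sigma_{h_1}^2\sigma_{h_{BS}}^2-\big(\tfrac{\pi}{4}\sigma_{h_1}\sigma_{h_{BS}}\big)^2=\sigma_{h_1}^2\sigma_{h_{BS}}^2\big(1-\tfrac{\pi^2}{16}\big)$, and multiplying by $\alpha M$ yields $\sigma_A=\alpha M\sigma_{h_1}^2\sigma_{h_{BS}}^2\big(1-\tfrac{\pi^2}{16}\big)$ (the statement writes $\sigma_A$ for what is really the variance).

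There is essentially no hard step here; the only thing to be careful about is the normalization convention for the Rayleigh distribution and the fact that, although the CLT remark motivates treating $A$ as Gaussian, the \emph{exact} mean and variance do not require any Gaussian approximation — they follow from linearity of expectation and independence across the $M$ elements, together with the per-element independence of $|h_1^m|$ and $|h_{BS}^m|$. I would therefore present the argument in the order: (i) reduce $A$ to a sum of i.i.d.\ products; (ii) quote the Rayleigh first and second moments; (iii) multiply out using independence to get $\mathbb{E}[R_m]$ and $\mathrm{Var}(R_m)$; (iv) scale by $\sqrt{\alpha}M$ and $\alpha M$ respectively. The mild subtlety — really a matter of bookkeeping rather than an obstacle — is just making sure the factor $1-\pi^2/16$ emerges from $\mathbb{E}[R_m^2]-(\mathbb{E}[R_m])^2$ with the $\sigma_{h_1}^2\sigma_{h_{BS}}^2$ factored out cleanly.
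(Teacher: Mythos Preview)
Your proposal is correct and follows essentially the same approach as the paper's proof: both reduce to computing the first and second moments of a single product $|h_1^m||h_{BS}^m|$ of independent Rayleigh envelopes, obtain $\tfrac{\pi}{4}\sigma_{h_1}\sigma_{h_{BS}}$ and $\sigma_{h_1}^2\sigma_{h_{BS}}^2$ respectively, and then scale by $\sqrt{\alpha}M$ and $\alpha M$ using independence across elements. Your remark that the exact mean and variance do not rely on the CLT (which the paper invokes only to justify the Gaussian shape of $A$) is a welcome clarification but does not change the argument.
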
\begin{proof}
		We refer interested readers to Appendix A for the derivation steps.
	\end{proof} 
	
\begin{lemma}
		\label{lemma:OPB}
		The mean and variance of $B$ are obtained as $\mu_B=0$ and $\sigma_B=N\sigma_{g_1}^2\sigma_{g_{BS}}^2$, respectively. 
	\end{lemma}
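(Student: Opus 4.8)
The approach is to exploit two facts: the circular symmetry of the Rayleigh fading coefficients, and that the passive phase vector $\boldsymbol{\beta}$ is matched to the $U_2$--RIS--BS cascade and is therefore statistically independent of $g_1^n$. (Here, as in Lemma~\ref{lemma:OPA}, $\sigma_B$ denotes the variance $\mathbb{E}[|B-\mu_B|^2]$.) First I would substitute $\beta^n=e^{-j\angle(g_2^ng_{BS}^n)}=\overline{g_2^n}\,\overline{g_{BS}^n}/(|g_2^n|\,|g_{BS}^n|)$ into $B=\sum_{n=1}^{N}g_1^n\beta^ng_{BS}^n$ and simplify the $n$-th summand to $T_n\triangleq g_1^n\,e^{-j\angle g_2^n}\,|g_{BS}^n|$, i.e.\ a product of the circularly symmetric variable $g_1^n$, a unit-modulus factor that depends only on $g_2^n$, and the nonnegative magnitude $|g_{BS}^n|$.

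Next, conditioning on $g_2^n$ and $g_{BS}^n$, I would observe that $T_n$ is $\mathcal{CN}(0,\sigma_{g_1}^2|g_{BS}^n|^2)$, since multiplying a circularly symmetric complex Gaussian by a (conditionally deterministic) complex scalar merely rescales its variance and preserves circular symmetry. Consequently each summand has zero conditional mean, which gives $\mu_B=0$; and averaging the conditional second moment over $|g_{BS}^n|$ with $\mathbb{E}[|g_{BS}^n|^2]=\sigma_{g_{BS}}^2$ yields $\mathbb{E}[|T_n|^2]=\sigma_{g_1}^2\sigma_{g_{BS}}^2$.

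Finally, since the channel coefficients are i.i.d.\ across $n$, the $T_n$ are mutually independent and zero-mean, so every cross term in $\mathbb{E}[|B|^2]$ vanishes and $\sigma_B=\mathbb{E}[|B|^2]=\sum_{n=1}^{N}\mathbb{E}[|T_n|^2]=N\sigma_{g_1}^2\sigma_{g_{BS}}^2$; the CLT invoked in the main text then justifies modeling $B$ as $\mathcal{CN}(0,\sigma_B)$ for large $N$.

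The one point requiring care — rather than a genuine obstacle — is arguing that the alignment intended for $U_2$ acts on the $U_1$ branch like an independent random rotation, so that (unlike $A$, whose summands are products of magnitudes and hence strictly positive) no residual mean survives in $B$. Once $T_n$ is rewritten as $g_1^n e^{-j\angle g_2^n}|g_{BS}^n|$ this follows immediately from circular symmetry, and the remaining moment computations are routine and mirror those of Lemma~\ref{lemma:OPA}.
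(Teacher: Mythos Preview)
Your argument is correct and reaches the same conclusion, but it is more careful than the paper's own proof. The paper's Appendix~B is essentially a one-line computation: it observes that $\beta^n$ does not coherently align the $U_1$ cascade, asserts that the summand therefore stays a zero-mean complex Gaussian, and then directly applies the product rule $\mathrm{VAR}[g_1^n g_{BS}^n]=\mathrm{VAR}[g_1^n]\,\mathrm{VAR}[g_{BS}^n]=\sigma_{g_1}^2\sigma_{g_{BS}}^2$, summing over $n$ via the CLT. In particular, the paper treats $g_1^n\beta^n g_{BS}^n$ as if it were simply a product of two independent zero-mean complex Gaussians and does not pause over the fact that $\beta^n$ itself depends on $g_{BS}^n$.

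Your route differs precisely at that point: by substituting $\beta^n=e^{-j\angle(g_2^n g_{BS}^n)}$ and simplifying to $T_n=g_1^n e^{-j\angle g_2^n}|g_{BS}^n|$, you make the dependence explicit and then dissolve it via conditioning and circular symmetry. This buys you a rigorous justification of the step the paper takes for granted, and it also makes transparent why the mean of $B$ vanishes while that of $A$ does not. The paper's version is shorter; yours actually verifies the independence structure that the product-of-variances formula requires.
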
\begin{proof}
		We refer interested readers to Appendix B for the derivation steps.
	\end{proof} 
	
\begin{lemma}
		\label{lemma:OPC}
		The mean and variance of $C$, $D$, $E$, and $F$ are calculated as $\mu_C=0$, $\sigma_C=\alpha M\sigma_{h_2}^2\sigma_{h_{BS}}^2$, $\mu_D=\cfrac{N\pi \sigma_{g_2}\sigma_{g_{BS}}}{4}$, $\sigma_D=N\sigma_{g_2}^2\sigma_{g_{BS}}^2\left(1-\cfrac{\pi^2}{16}\right)$, $\mu_E=0$, $\sigma_E=\sigma_{BS}^2$, $\mu_F=0$, $\sigma_F=W_0$, respectively. 
	\end{lemma}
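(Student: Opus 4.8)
The plan is to compute the first two moments of $C$, $D$, and $E$ element-by-element and then aggregate over the $M$ (resp.\ $N$) RIS elements, invoking the CLT exactly as in the statement so that the listed mean/variance pairs fully characterize the corresponding sums; $F$ is read off directly from the noise model. Two of the four quantities ($C$ and $E$) parallel Lemma~\ref{lemma:OPB} and one ($D$) parallels Lemma~\ref{lemma:OPA}, so most of the work is bookkeeping.

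For $C=\sqrt{\alpha}\sum_{m=1}^{M}h_2^m\theta^m h_{BS}^m$, the decisive observation is that the active-part phases are matched to $U_1$, i.e.\ $\theta^m=e^{-j\angle(h_1^m h_{BS}^m)}$, and are therefore \emph{not} co-phased with $U_2$'s channels. Substituting this value gives $h_2^m\theta^m h_{BS}^m=|h_2^m|\,|h_{BS}^m|\,e^{j(\angle h_2^m-\angle h_1^m)}$: a product of two Rayleigh magnitudes multiplied by a residual phase $\angle h_2^m-\angle h_1^m$ which, since $h_1^m$ and $h_2^m$ are independent and circularly symmetric, is uniform on $[0,2\pi)$ and independent of the magnitudes. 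Hence $\mathbb{E}[h_2^m\theta^m h_{BS}^m]=0$, so $\mu_C=0$, whereas $\mathbb{E}[|h_2^m\theta^m h_{BS}^m|^2]=\mathbb{E}[|h_2^m|^2]\mathbb{E}[|h_{BS}^m|^2]=\sigma_{h_2}^2\sigma_{h_{BS}}^2$; summing $M$ i.i.d.\ terms and scaling by $\alpha$ yields $\sigma_C=\alpha M\sigma_{h_2}^2\sigma_{h_{BS}}^2$. The same argument applied without the extra channel factor and without $\sqrt{\alpha}$ gives $E=\theta^m h_{BS}^m=e^{-j\angle h_1^m}|h_{BS}^m|$, so multiplication by the unit-modulus $\theta^m$ preserves the magnitude while leaving a uniform residual phase independent of $|h_{BS}^m|$; thus $\mu_E=0$ and $\sigma_E=\mathbb{E}[|h_{BS}^m|^2]=\sigma_{BS}^2$.

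For $D=\sum_{n=1}^{N}|g_2^n||g_{BS}^n|$, each summand is a product of two independent Rayleigh variables --- the same situation as for $A$ in Appendix~A, minus the amplification factor. Using $\mathbb{E}[|g|]=\tfrac{\sqrt{\pi}}{2}\sigma$ and $\mathbb{E}[|g|^2]=\sigma^2$ for $g\sim\mathcal{CN}(0,\sigma^2)$, independence gives $\mathbb{E}[|g_2^n||g_{BS}^n|]=\tfrac{\pi}{4}\sigma_{g_2}\sigma_{g_{BS}}$ and $\mathrm{Var}(|g_2^n||g_{BS}^n|)=\sigma_{g_2}^2\sigma_{g_{BS}}^2\big(1-\tfrac{\pi^2}{16}\big)$, and summing $N$ i.i.d.\ terms yields $\mu_D=\tfrac{N\pi\sigma_{g_2}\sigma_{g_{BS}}}{4}$ and $\sigma_D=N\sigma_{g_2}^2\sigma_{g_{BS}}^2\big(1-\tfrac{\pi^2}{16}\big)$. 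Finally, $F$ is the AWGN sample $w_0\sim\mathcal{CN}(0,W_0)$, so $\mu_F=0$ and $\sigma_F=W_0$ are immediate.

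The only genuinely non-mechanical step --- and the one I would flag as the crux --- is showing that $\mu_C=0$ (and likewise $\mu_E=0$): one must exploit that the active-part phase vector is designed for $U_1$ rather than $U_2$, so that the phase cancellation leaves a uniformly distributed residual angle independent of the magnitudes, which annihilates the mean. Everything else is the independence-and-Rayleigh-moments computation already carried out in Appendices~A and~B.
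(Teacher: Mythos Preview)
Your proof is correct and follows essentially the same approach as the paper: $C$ is handled like $B$ (non-coherent alignment, hence zero mean and product-of-variances), $D$ like $A$ (coherent alignment, hence Rayleigh-product moments), $E$ via rotational invariance of $h_{BS}^m$ under the unit-modulus factor $\theta^m$, and $F$ directly from the noise model. Your write-up is simply more explicit—particularly the residual-phase argument for $\mu_C=\mu_E=0$—than the paper's Appendix~C, which just points back to Appendices~A and~B.
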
\begin{proof}
		We refer interested readers to Appendix C for the derivation steps.
	\end{proof} 
 The outcomes of these three lemmas are summarized as in Table I.
\begin{table}[t!]
\caption{Random Variable (RV) Distributions For (3)} % title name of the table
\centering % centering table
\begin{tabular}{l c c r} % creating 10 columns
\hline 
 RV  & $\mu$ & $\sigma^2$ & Type of RV
\\ [0.5ex]
\hline \hline % inserts single-line
% Entering 1st row [2ex]
\\[.3ex]
 $A$ & $\frac{\sqrt{\alpha}M\pi\sigma_{h_1}\sigma_{h_{BS}}}{4}$  & $\alpha M\sigma_{h_1}^2\sigma_{h_{BS}}^2(1-\frac{\pi^2}{16})$ & Real Gaussian   \\[2ex]
 \\[.3ex]
 $B$ & $0$ & $N\sigma_{g_1}^2\sigma_{g_{BS}}^2$ &  Complex Gaussian  \\[2ex]
 \\[.3ex]
 $C$ & $0$    & $\alpha M\sigma_{h_2}^2\sigma_{h_{BS}}^2$ &  Complex Gaussian   \\[2ex]
 \\[.3ex]
 $D$ & $\cfrac{N\pi \sigma_{g_2}\sigma_{g_{BS}}}{4}$    &  $N\sigma_{g_2}^2\sigma_{g_{BS}}^2\left(1-\cfrac{\pi^2}{16}\right)$ &  Real Gaussian \\[2ex]
 \\[.3ex]
 $E$ & $0$    & $\sigma_{BS}^2$ &  Complex Gaussian  \\[2ex]
 \\[.3ex]
 $F$ & $0$    & $W_0$ &  Complex Gaussian  \\[2ex]
 
 \hline

 %[1ex] adds vertical space
\hline % inserts single-line
\end{tabular}
\label{tab:PPer}
\end{table}

 %$A\sim\mathcal{N}(\cfrac{\sqrt{\alpha}M\pi \sigma^2}{4} ,\alpha M\sigma^4(1-\cfrac{\pi^2}{16}))$, $B\sim\mathcal{CN}(0,N\sigma^4)$, $C\sim\mathcal{CN}(0,\alpha N\sigma^4)$,  $D\sim\mathcal{N}(\cfrac{N\pi \sigma^2}{4}, N\sigma^4(1-\cfrac{\pi^2}{16}))$, $E\sim\mathcal{CN}(0,\sigma^2)$, $F\sim\mathcal{CN}(0,W_0)$.
 
 In SINR expressions, squared sum of Gaussian random variables exist, which lead to the chi-squared distribution, which is a form of the Gamma distribution. Hence, the SINR is the ratio of chi-squared random variables, where $|A+B|^2$ and $|C+D|^2$ are non-central chi-squared random variables with two degrees of freedom and $\sigma_z^2\alpha\sum_{m=1}^{M}|E|^2$ is a central chi-squared random variable with $2M$ degrees of freedom. As seen in Fig. 2, the SINRs of both users fit the Gamma distribution.
 
 To obtain the outage probability ($OP$) of both users, we may use $OP_k=P(\gamma_k<2^{r_k}-1)$, where $r_k$ is the minimum rate at which the QoS holds for $Uk$. Since equivalent steps will be used for both users, for illustrative purposes, let us focus on $U_1$ supported by the active part of the RIS. The random variables in the SINR are left on one side and the CDF of the equation is solved for as
 
 \begin{figure}[t!]
		\centering
		\includegraphics[width=1.0\columnwidth,height=8cm]{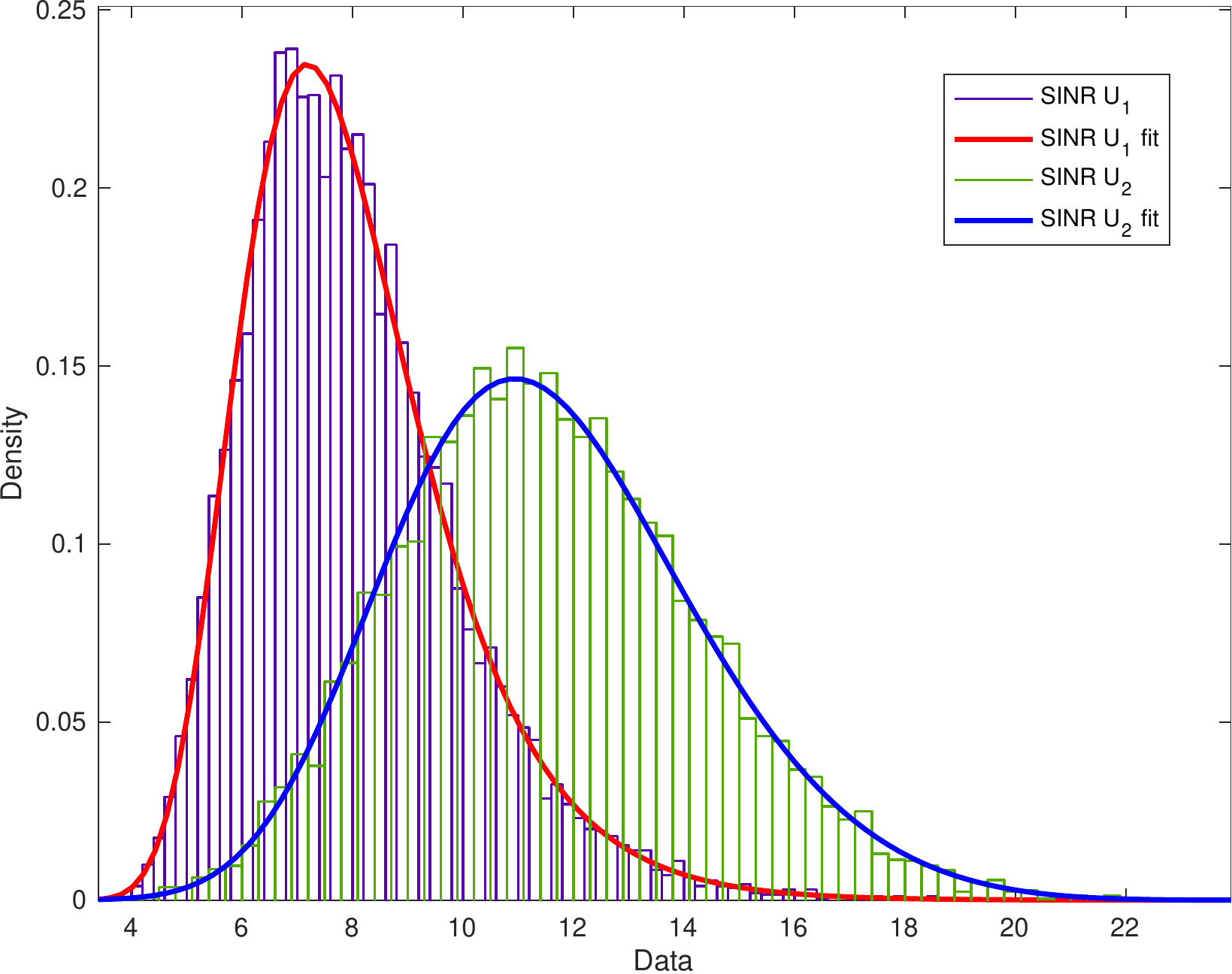}
		\caption{Fitted distribution of user SINRs.}
		\label{fig:Fig.2}
\end{figure}

 \begin{align}
&OP_k =  P\left(\cfrac{P_t|A+B|^2}{P_t|C+D|^2+\sigma_z^2\alpha\sum_{m=1}^{M}|E|^2+F}<v\right) \nonumber\\&= P(P_t|A+B|^2-P_t|C+D|^2v-\sigma_z^2\alpha\sum_{m=1}^{M}|E|^2v<Fv).
\label{eq6}
\end{align}

The term above may be simply expressed as the CDF of $G$, $P(G<g) = F_G(g)$, where $G=P_t|A+B|^2-P_t|C+D|^2v-\sigma_z^2\alpha\sum_{m=1}^{M}|E|^2v$ is a difference of multiple chi-square random variables, hence, complicating mathematical calculations. However, the characteristic functions (CFs) come in handy when evaluating chi-squared random variables. The CF of a chi-squared random variable is given as \cite{simon2002probability}

		\begin{equation}
\Psi_X(\omega)=\cfrac{1}{(1-2j\omega\sigma^2)^{n/2}}\exp{\left(\cfrac{j\omega\mu^2}{1-2j\omega\sigma^2}\right)},
\label{eq7}
		\end{equation} 
where $X=\sum_{k=1}^{n}X_k^2$ and $X_k \sim \mathcal{N} (\mu_k,\sigma^2).$ Hence, the CF of $E$ can be calculated. 

\begin{lemma}
		\label{lemma:OPD}
		Although the random variables $\{A,C\}$ and $\{B,D\}$ share the common terms $h_{BS}^m$ and $g_{BS}^n$, respectively, they are uncorrelated due to their zero correlation coefficient.
	\end{lemma}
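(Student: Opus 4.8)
The plan is to establish the claim by directly evaluating the cross-covariances $\mathrm{Cov}(A,C)$ and $\mathrm{Cov}(B,D)$ and showing that both vanish; since the marginal variances of $A,B,C,D$ are all strictly positive (Table~\ref{tab:PPer}), this immediately gives zero correlation coefficients $\rho_{AC}=\rho_{BD}=0$. First I would insert the coherent phase choices $\theta^m=e^{-j\angle(h_1^m h_{BS}^m)}$ and $\beta^n=e^{-j\angle(g_2^n g_{BS}^n)}$ into $C$ and $B$, which lets me peel the phase of the shared channel off its magnitude and rewrite $C=\sqrt{\alpha}\sum_{m=1}^{M} h_2^m e^{-j\angle h_1^m}|h_{BS}^m|$ and $B=\sum_{n=1}^{N} g_1^n e^{-j\angle g_2^n}|g_{BS}^n|$. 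In this form the only randomness literally shared with $A$ and $D$ is $|h_{BS}^m|$ and $|g_{BS}^n|$, respectively, while each summand of $C$ still carries the factor $h_2^m$ and each summand of $B$ still carries the factor $g_1^n$.

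The key observation is that, under the i.i.d.\ $\mathcal{CN}(0,\sigma^2)$ channel model, $h_2^m$ is independent of the entire collection $\{h_1^{m'},h_{BS}^{m'}\}_{m'}$ and has zero mean, and symmetrically $g_1^n$ is independent of $\{g_2^{n'},g_{BS}^{n'}\}_{n'}$ with zero mean. I would then write $\mathrm{Cov}(A,C)=\mathbb{E}[A\,\overline{C}]-\mathbb{E}[A]\,\overline{\mathbb{E}[C]}$, expand the product $A\,\overline{C}$ into a double sum over indices $m,m'$, and take expectations term by term; every resulting term factors out an $\mathbb{E}[\overline{h_2^{m'}}]=0$ by the stated independence, hence $\mathbb{E}[A\,\overline{C}]=0$, and since $\mathbb{E}[C]=\mu_C=0$ by Lemma~\ref{lemma:OPC} we obtain $\mathrm{Cov}(A,C)=0$ and $\rho_{AC}=0$. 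The identical computation with $g_1^n$ playing the role of the isolated zero-mean factor yields $\mathbb{E}[B\,\overline{D}]=0$, and with $\mathbb{E}[B]=\mu_B=0$ from Lemma~\ref{lemma:OPB} we get $\mathrm{Cov}(B,D)=0$ and $\rho_{BD}=0$.

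The step I expect to require the most care is the bookkeeping around the ``shared'' randomness: although $A$ and $C$ (resp.\ $B$ and $D$) genuinely depend on the common realizations $h_{BS}^m$ (resp.\ $g_{BS}^n$), and although inside $C$ the magnitude $|h_1^m|$ and the phase $e^{-j\angle h_1^m}$ come from the same complex Gaussian, none of those dependencies survives once the independent zero-mean factor is pulled outside the expectation. I would therefore emphasise that the decorrelation is produced entirely by that factor and not by any delicate cancellation; as a consistency check, even discarding $h_2^m$ one sees that the product $|h_1^m|\,e^{j\angle h_1^m}=h_1^m$ appearing in $\mathbb{E}[A\,\overline{C}]$ is already zero-mean. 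Finally, since Lemmas~\ref{lemma:OPA}--\ref{lemma:OPC} make $A,B,C,D$ (asymptotically) Gaussian, this zero correlation is exactly what will later allow $A+B$ and $C+D$ to be treated as statistically separable when their characteristic functions are combined in the outage-probability derivation.
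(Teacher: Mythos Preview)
Your proposal is correct and follows essentially the same route as the paper: substitute the coherent phase choice, rewrite the shared channel as its magnitude times a unit phasor, and then kill the cross-moment by pulling out the independent zero-mean factor ($h_2^m$ for the $A,C$ pair and $g_1^n$ for the $B,D$ pair). Your treatment is in fact tidier than the paper's---you use the proper complex covariance $\mathbb{E}[A\,\overline{C}]$, state explicitly which independent zero-mean factor does the work, and add the consistency check via $|h_1^m|e^{j\angle h_1^m}=h_1^m$---but the underlying argument is the same.
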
\begin{proof}
		We refer interested readers to Appendix D for the derivation steps.
	\end{proof} 

To conduct our theoretical analysis and obtain the characteristic functions, since the random variables are uncorrelated, we may express $|A+B|^2$ and $|C+D|^2$ in terms of their real and imaginary parts as $[\Re(A)+\Re(B)]^2+[\Im(B)]^2$ and $[\Re(C)+\Re(D)]^2+[{\Im}(C)]^2$, respectively. This is a sum of independent non-central chi-squared and central chi-squared random variables and since we know the statistics of the random variables the CFs may be calculated from \cite{simon2002probability} using
	\begingroup\makeatletter\def\f@size{8.5}\check@mathfonts
		\begin{align} 
\Psi_X(\omega)&=\cfrac{1}{((1-2j\omega\sigma_1^2)(1-2j\omega\sigma_2^2)^{n/2})}\exp{\left(\cfrac{j\omega\mu_1^2}{1-2j\omega\sigma_1^2}\right)}, \label{eq8}
\end{align}
\endgroup
where $\mu_1$, $\sigma_1$, and $\sigma_2$ will be substituted with the means and variances of the summation terms. Considering the sum of independent RVs as in \eqref{eq6}, is the multiplication of their individual CF's \cite{basar2020reconfigurable}, we obtain
\begin{align}
    &\Psi_G(w)= \nonumber \\ &\Psi_{P_t|A+B|^2}(w)\Psi_{(-P_t|C+D|^2v)}(w)\Psi_{(-\sigma_z^2\alpha\sum_{m=1}^{M}|E|^2v)}(w).\label{eq9}
\end{align}
Then, we may approach using the Gil-Pelaez's inversion formula \cite{quad_form}, since it is possible to obtain the cumulative distribution function (CDF) of this difference of chi-squared random variables as
 
 \begin{equation}
     F_G(g)=\cfrac{1}{2}-\int_{0}^{\infty}\cfrac{\Im\{e^{-jwg}\Psi_G(w)\}}{w\pi}dw, \label{eq10}
 \end{equation}
 where $\Im$ denotes the imaginary part, $F_G(g)$ is the CDF of $G$ and $\Psi_G(w)$ is the CF of $G$. With this CDF expression of the difference of chi-squares, it is possible to obtain the outage probability as mentioned previously.

\section{RIS Power Consumption and Amplification Factor}

The fully connected active RIS architecture includes a power amplifier (PA) in each RIS element and the power consumption of the active part of the RIS is dependent on the power consumed by the PAs. The power consumption of a single active RIS element is linear function of the output signal power of the PA and modeled as
\begin{align} %\label{eq:amppwr}
    P_\textit{a}=\frac{1}{\nu} P_\textit{o}, \label{eq11}
\end{align}
where $\nu$ and $P_\textit{o}$ denote the efficiency and output power of the power amplifier, respectively. The total transmit power available to the active part of the hybrid RIS architecture is denoted by $P_t^\textit{RIS}$ and varied according to system requirements to sustain power domain NOMA. The output power of a single active RIS element is therefore obtained as $P_\textit{o}=P_t^\textit{RIS}/N$. To obtain the maximum output power from a single active RIS element, we need to maximize the gain of the amplifier by considering $P_\textit{o}$ constraint. Considering an ideal PA with $\nu=1$, the gain of the amplifier is obtained as follows
\begin{figure}[t!]
		\centering
		\includegraphics[width=1.0\columnwidth,height=8cm]{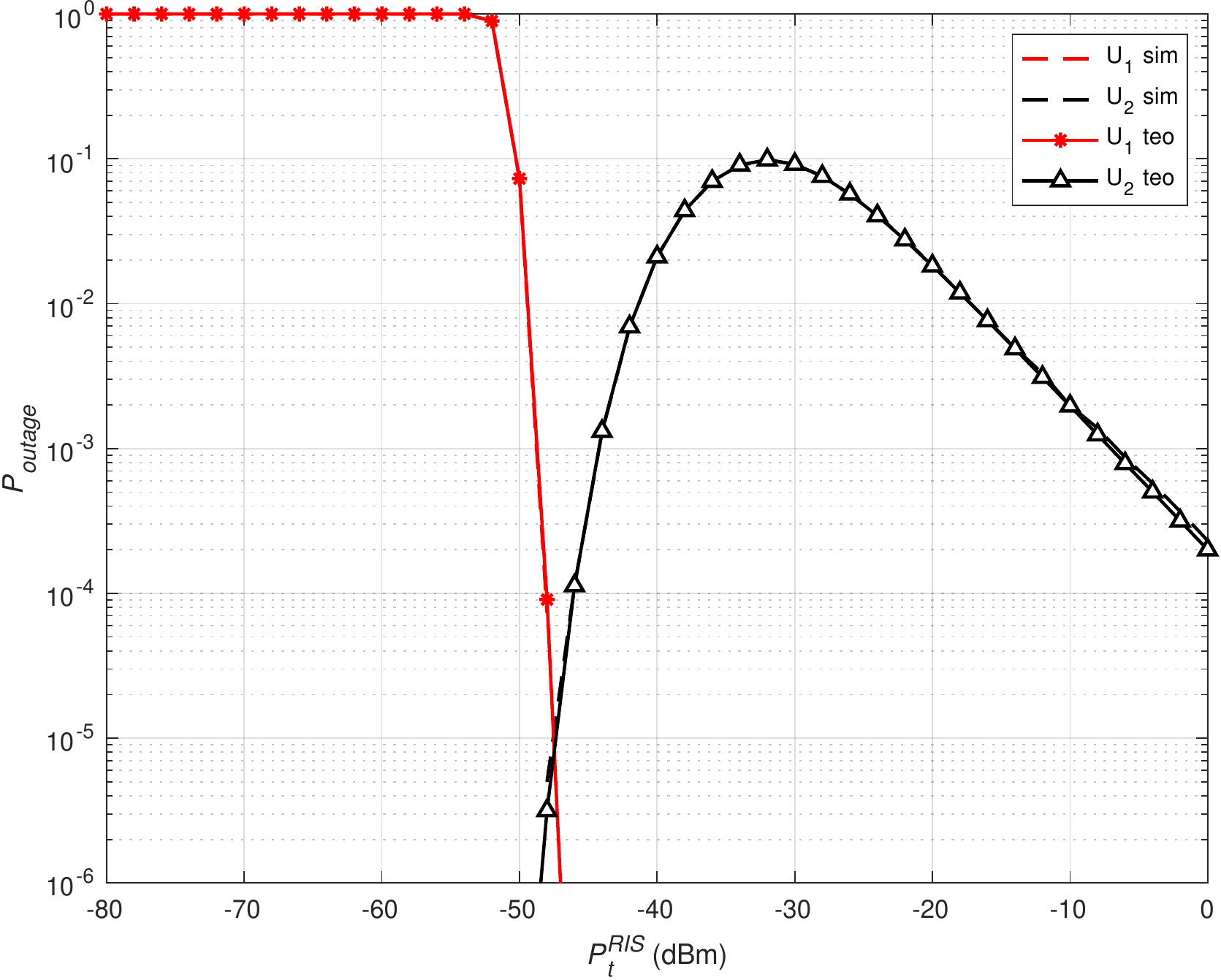}
		\caption{$P_t^\textit{RIS}$ vs outage probability to find optimal $\alpha$ for given default parameters.} %$(M=N=512, Pt1=Pt2=15dbm, Rth=2, vary alpha/Pt RIS, ep=0,nn=n_amp=-130db)$}
		\label{fig:Fig.3}
\end{figure} 

\begin{align}
    G=\min\left(\sqrt{\frac{P_\textit{o}}{P_{t} \mathrm{E}[\|\mathbf{h}_{k}\|^2]}},G_\textit{max}\right) \label{eq12}
\end{align}
where $G_\textit{max}$ is the maximum gain of the amplifier and sets an upperbound on G. The denominator of the equation represents the average signal power at the input of the amplifier. Since we consider the uniform gain distribution for all active RIS elements, the average channel gain between the assisted user and active RIS elements is considered.

As seen in Eq. (\ref{eq12}), the amplifier gain $G$, also referred to as $\alpha$, depends on many variables and needs to be determined very carefully. The optimum gain should be obtained in order to best serve both users. Fig. \ref{fig:Fig.3} shows the effect of available power on the active RIS side for both users. It can been observed that for the parameters given in Fig. \ref{fig:Fig.3} which we consider as default parameters, the NOMA scheme can efficiently operate when -47 dBm amplifier power, which corresponds to $\alpha=8.5$ for each RIS element, is allocated for the active part. This result shows that the required power on the active RIS side is much less than the transmission power of the users and NOMA can be provided with a very small amount of power allocation and coherent alignment of the users.

Additionally, considering $\alpha$ is the same for all elements, the gain $\alpha$ can also be expressed in terms of the system parameters as $\alpha=\cfrac{P_t^\textit{RIS}/M}{P_tC_h}$, where $C_h$ is the average channel power of the users. 
It is important to note that for practicality, $\alpha$ is limited to a minimum of $0$ dB and a maximum of $30$ dB \cite{fatih_recep_PA_paper}. However, for the sake of better visualization explanations, Fig. 3 is not limited by the practical range of $\alpha$. The results mentioned in the remaining sections obey the aforementioned practical amplifier gain ranges. We optimize $P_t^{RIS}$ and then we extract the optimal $\alpha$ based on that optimum $P_t^{RIS}$ using the formulas above. $\alpha$ is selected in the fairest manner to serve both users. For the fixed $\alpha$ case, we look for the minimum difference of their outage performance which would be where they intersect as in Fig. \ref{fig:Fig.3}. However, for the fixed alpha case, the optimal alpha is best only for that specific set of parameters and as any parameter changes, it loses its optimality. 

\subsection{Problem Formulation}

In this case, we need to minimize $P_t^{RIS}$ while maximizing the outage probability performance which can be define by the optimization problem P1:

\begin{align}
        \mathrm{P}1=&\min_{P_t^{RIS},\Delta}\quad \Delta \nonumber \\
        &\hspace*{0.4cm}\text{s.t.}\hspace*{0.6cm} P_{out}^k \leq \Delta, \quad  k\in\{1,2\}. \label{eq13}
\end{align}
Here, $\Delta$ is the auxiliary variable representing the outage probability threshold constraint for a fair $P_t^{RIS}$ selection that serves both users. In P1, the aim is to obtain the best outage performance by using the minimum available power. To achieve this, we need to minimize both $P_t^{RIS}$ and $\Delta$. Since we aim at the fairness while obtaining the best performance with a minimum power consumption, we equivalently target the minimum difference in outage performance which provides the same result/concept. This can be formulated the and solved as the equivalent standard problem where P1 is transformed into P2:

\begin{align}
        \mathrm{P}2=&\min_{P_t^{RIS}} \quad \| P_{out}^1 - P_{out}^2 \| . \label{eq14}
\end{align}

Furthermore, we can solve P2 with a standard line search methods such as the Golden Selection algorithm and Simulated Annealing. However, different methods of $\alpha$ optimization may also be possible.
%Hence, a standard line search method such as the golden section algorithm and simulated annealing method is applied to find the optimal alpha value at each point and parameter as they change in the plots. However, different methods of Alpha optimization may also be possible. 
%after obtaining $P_t^\textit{RIS}$, we minimize that absolute difference of the two over $P_t^\textit{RIS}$.
%Our initial representation is $\underset{P_t}{\min} (\max(P_{out}^k))$
%to do so we can minimize a outage thereshold $\Delta$  for the users which is an auxiliary variable, and this can be formulated the and solved as the equivalent standard problem: $\underset{P_t, \Delta}{\min}\Delta$ where $P_{out} \leq \Delta $
%so the optimization variable is to play with this thereshold and minimize it where both users need to satisfy it.
It should be noted that we ensure maximum fairness and all of the users have a weight no less than $\Delta$. %There is no QoS constraint here, the QoS is given here fixed.

%and to solve above standard problem with a line search algorithm using

\begin{table}[t!] 

	\centering
	\caption{Default Parameters}
	\label{tab:PLtable}
	{
	\begin{tabular}{lcc} 
	
		\hline 
		Parameter							& Value 		 \\ \hline \hline \\  [.3ex]
		$P_t^\textit{RIS}$ [dBm]				&  $-47$   \\  [2ex]
		 $\alpha$ (linear)		& $8.5$    \\ [2ex]
		 $M=N$		& $512$    \\ [2ex]
		 $P_t$ [dBm]  	& $15$  \\ [2ex]
		 $R^{th} $ [bps/Hz]  	& $2$  \\ [2ex]
		$\epsilon$ 		& $0$ (perfect SIC)  \\ [2ex]
		$W_0$ [dBm]		  	& $-130$  \\ [2ex]  
		$n_{amp}$ [dBm]	  	& $-130$  \\ [2ex]  
		\hline
	\end{tabular} 
	}
\end{table}

\section{Simulation Results}

In this section, the system of Fig. \ref{fig:Fig1} is revisited, where the transmission is carried out via an RIS under a blocked link between the users and the BS. For this setup, $d_{\text{$U_1$-BS}}$, $d_{\text{$U_2$-BS}}$, $d_{\text{$U_1$-RIS}}$, $d_{\text{$U_2$-RIS}}$, $d_{\text{RIS-BS}}$, represent the distances between the users and BS, users and RIS, RIS and BS, respectively. In addition,  $h_{\text{$U_1$}}$, $h_{\text{$U_2$}}$, $h_{\text{RIS}}$, $h_{\text{BS}}$, and $f_c$, represent the height of the users, RIS, BS, and the operating frequency, respectively. The 3GPP UMi point-to-point NLOS path loss model for a single Tx/Rx path is also considered \cite{3rdGP}. Within the $2-6$ GHz frequency band and Tx-Rx distance ranging from $10-2000$ m, the 3GPP UMi path loss model for NLOS transmission is expressed as:
\begin{equation}
    L(d)[\text{dB}]=36.7\text{log}_{10}(d)+22.7+26\text{log}_{10}(f_c). \label{eq15}
\end{equation}
  The RIS is composed of $M$ active and $N$ passive and controllable reflecting elements. 
  
For all computer simulations, path loss is included and the default parameters in Table II as well as the setup in Table III are used as the fixed parameters for simulations and analyses unless stated otherwise. Fixed $\alpha=8.5$ is considered through out the paper, as it is the optimal case for the default parameters as seen from Fig. \ref{fig:Fig.3}, unless it is specifically stated that the optimized case of $\alpha$ is use.

\begin{table}[t!] 

	\centering
	\caption{Computer Simulation Set-Up Parameters}
	\label{tab:PLtable}
	{
	\begin{tabular}{lcc} 
	
		\hline
		Parameter							& Value 		 \\ \hline \hline \\  [.3ex]
		$f_c$ [GHz]				&  $5$  \\  [2ex]
		 $d_{\text{$U_1$-BS}} $[m]		& $55.73$    \\ [2ex]
		 $d_{\text{$U_2$-BS}} $[m]		& $55.73$    \\ [2ex]
		 $d_{\text{$U_1$-RIS}} $[m]   	& $35.51$  \\ [2ex]
		 $d_{\text{$U_2$-RIS}} $[m]   	& $35.51$  \\ [2ex]
		$d_{\text{RIS-BS}}$ [m]		& $20.22$   \\ [2ex]
		$h_{\text{$U_1$}}$ [m]		  	& $10$  \\  [2ex]
		$h_{\text{$U_2$}}$ [m]		  	& $10$  \\  [2ex]
	    $h_{\text{RIS}}$ [m]		& $4$    \\ 	[2ex]
		$h_{\text{BS}}$ [m]      	& $1$  \\ [2ex]
		$W_0$ [dBm]	        	    & $-130$    \\ [2ex]
		\hline
	\end{tabular} 
	}
\end{table}

\begin{figure}[t!]
		\centering
		\includegraphics[width=1.0\columnwidth,height=8cm]{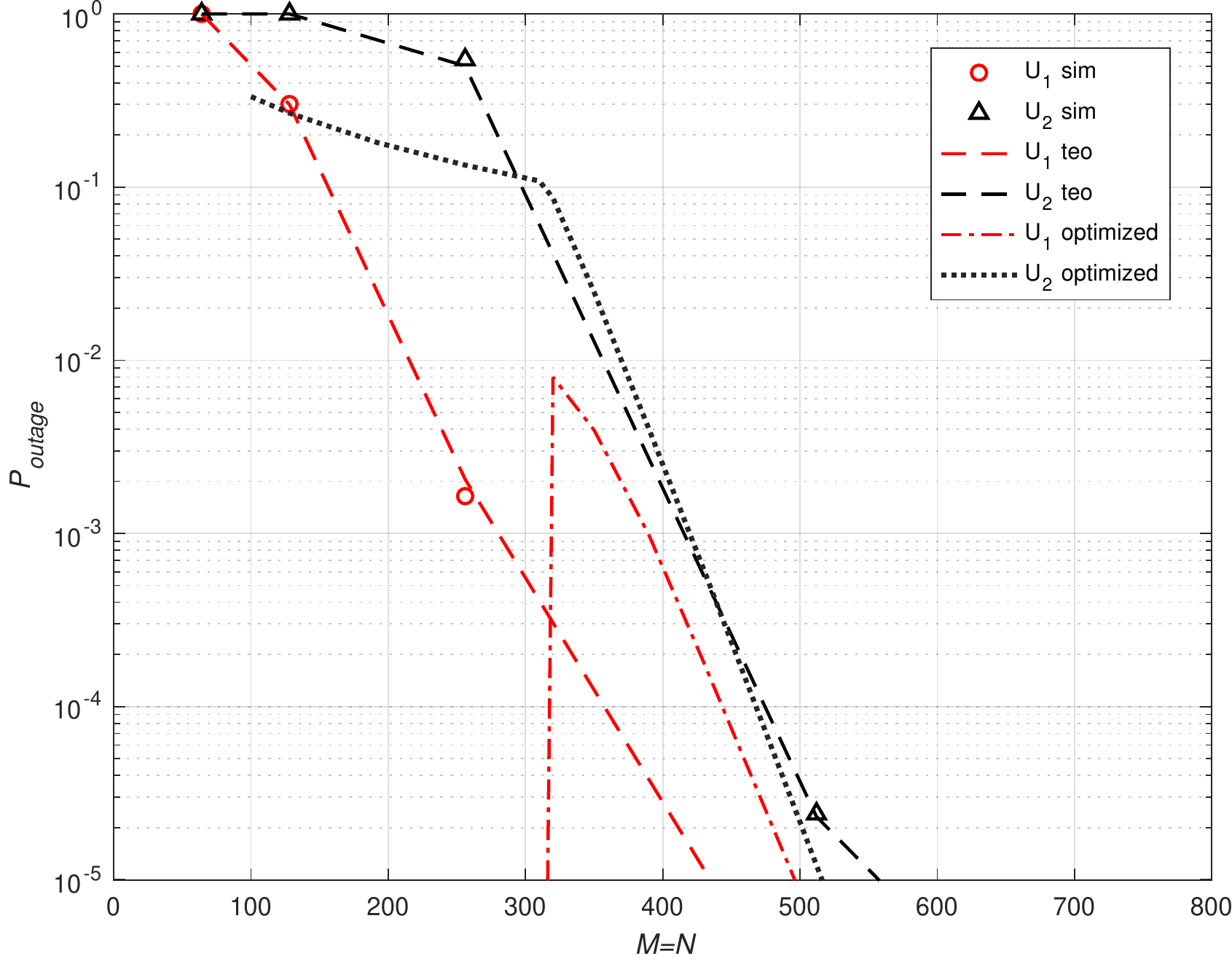}
		\caption{Outage probability performance with respect to varying RIS size.}
		\label{fig:Fig.4}
\end{figure} 

Typically in an RIS-aided communication system, the performance increases with the size of RIS. This behaviour can also be seen in Fig. \ref{fig:Fig.4}, which includes the outage probability performance simulation and theoretical simulation for both users with respect to varying RIS size. It can be observed that for a fixed $\alpha$, a minimum RIS size of $300$ is needed to enable communication at both users, however, a larger RIS is preferred for a enhanced performance at the cost of additional hardware expenditure. When $\alpha$ is optimized, it can be seen that the performance for both users are enhanced and a smaller RIS size is sufficient to provide the same performance compared to the fixed $\alpha$ case. $U_1$ has perfect communication while $U_2$ is at outage until an RIS size of $M=N=300$. Here, the optimization notices that $U_2$'s performance is insufficient and identifies no purpose in arranging $\alpha$ to aid it and arranges $\alpha$ completely for $U_1$. This is why $U_1$'s performance is very high. Once $U_2$'s performance starts to reach an acceptable performance, the optimization switches on and arranges $\alpha$ for both users to enable NOMA. This is why there is a sharp spike in $U_1$'s performance. Hence, $U_1$'s performance degrades to a certain point. Then as the RIS size increases, both users' performance increase to very good values while NOMA continues to be executed by the RIS.

\begin{figure}[t!]
		\centering
		\includegraphics[width=1.0\columnwidth,height=8cm]{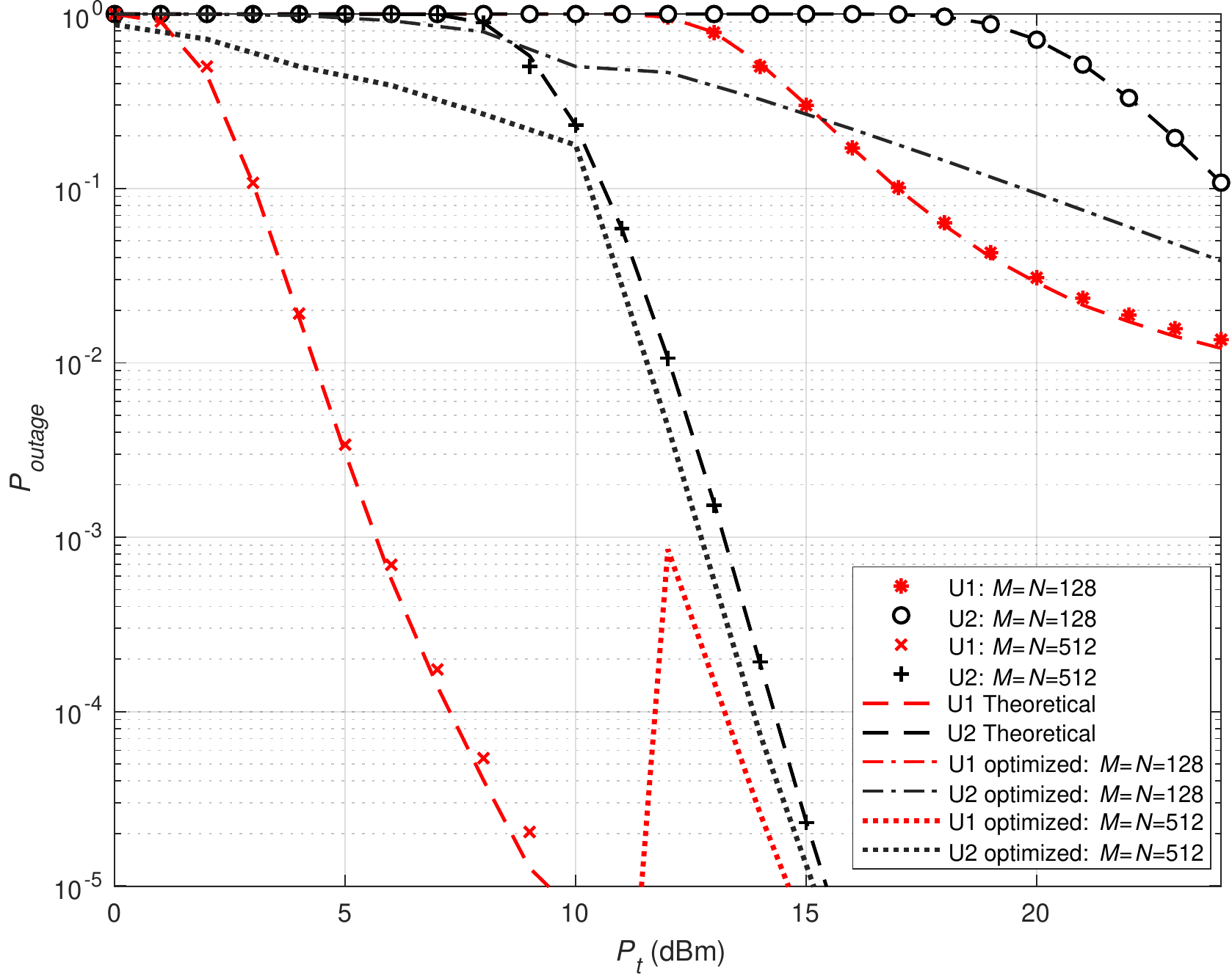}
		\caption{Outage probability performance as users transmit powers $P_t$ vary.} %fixed $\alpha=8.5$, default parameters)}
		\label{fig:Fig.5}
\end{figure} 

Furthermore, Fig. \ref{fig:Fig.5} presents the outage probability of both users for a varying transmit power, $P_t$ (dBm), of the users. It is observed that for both RIS sizes, an increase in $P_t$ enhances the performance for both users. However, it should be noted that the maximum transmit power at the users for uplink is $23$ (dBm) \cite{haider2019maximum}. It can be seen once more that as the RIS size increases, the performance drastically increases, especially for a high $P_t$. Additionally, for a smaller RIS size and high transmit powers, the performance reaches to a saturation and has an error floor. Hence, increasing the transmit power would not increase the performance after a certain point for a given RIS size. When comparing the fixed and optimized $\alpha$ scenarios, it is clear that the optimized case significantly outperforms the fixed $\alpha$ case. $U_1$ for $M=N=128$ performance is very good below $10^{-5}$, since $U_2$'s performance is not great for a low RIS size and since uplink transmit power below 23 (dBm) is not enough to boost $U_2$. However for $M=N=512$ we see that $U_2$ is in outage until around $9$ dBm and hence, $\alpha$ is optimized only for $U_1$ enhancing its performance and reducing the outage probability. Once $U_2$ reaches $10^{-1}$, $\alpha$ starts optimizing itself for both users and it can be seen from the spike of degradation in $U_1$'s performance. However, as $P_t$ increases to $14$ dBm, both users provide excellent outage probability performance.

\begin{figure}[t!]
		\centering
		\includegraphics[width=1.0\columnwidth,height=8cm]{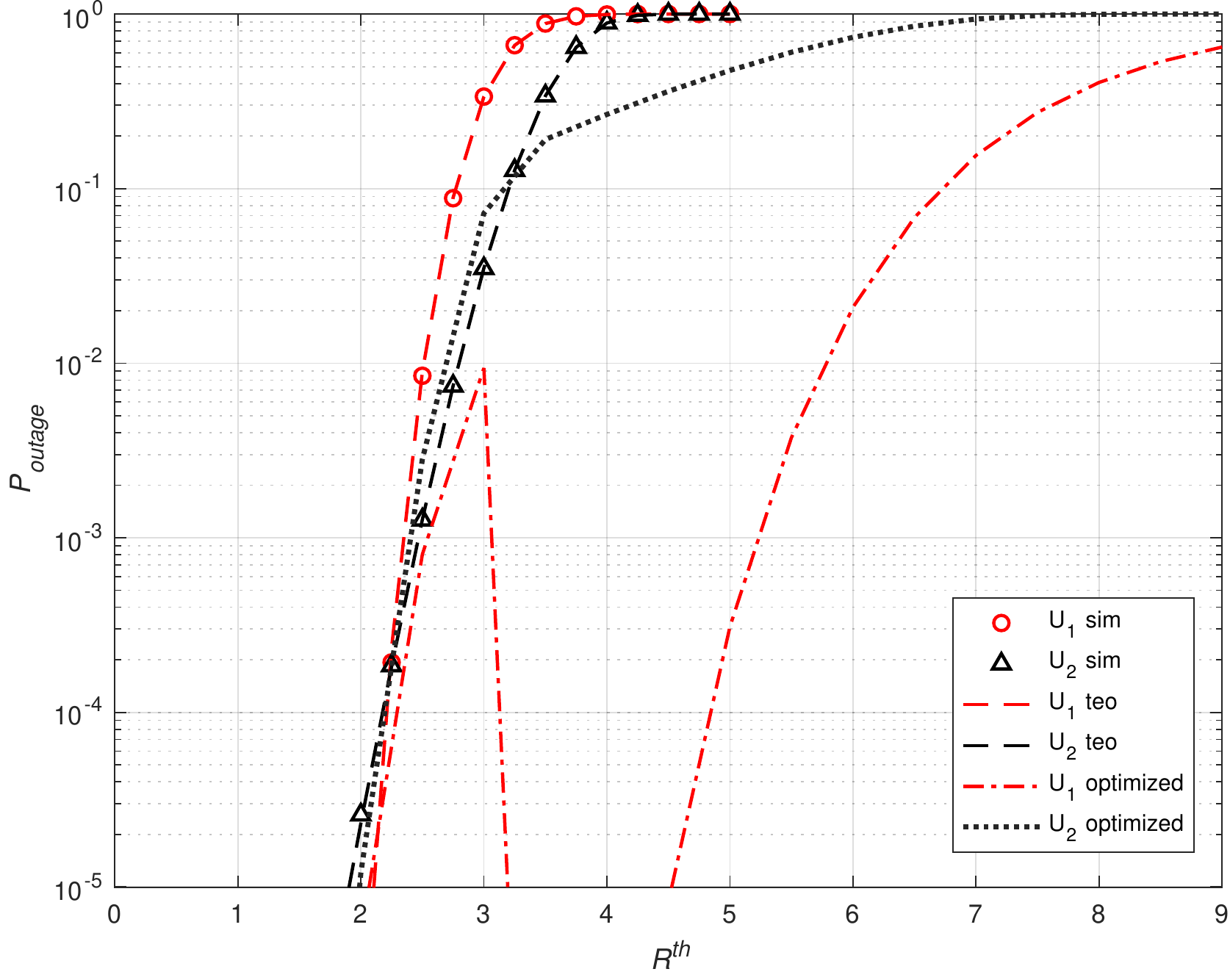}
		\caption{Outage probability performance of users as the QoS requirement is increased.} %(Fixed $\alpha=8.5$, default params. vary QoS)}
		\label{fig:Fig.6}
\end{figure} 

As the QoS requirement for a user's communication performance gets higher, the outage probability increases since it is harder to sustain that quality. This can be seen in Fig. \ref{fig:Fig.6} for both users presenting the fixed and optimized case. For the fixed user case, it can be seen that for both users, QoS of $R^{th}=2$ bps/Hz is easily sustained for the default parameters and as the quality requirement increases the outage probability increases and by $4$ bps/Hz both users are at outage. It should be noted that this is for the default parameters and the performance can be arranged to meet certain requirements by changing these parameters. However, with the same default parameters and only difference of using an optimized alpha, the QoS is enhanced drastically. Both users easily sustain the quality requirement from $0-2$ bps/Hz and the performance degrades as the QoS is increased to $3$ bps/Hz. Since $U_2$'s outage performance get very low at $3$ bps/Hz, the $\alpha$ optimization disregards $U_2$ and arranges itself for $U_1$ thus, once again providing it outstanding performance below $10^{-5}$ which is shown by the spike down at $R^{th}=3$. As the $R^{th}$ increases to $9$ bps/Hz, $U_1$'s performance slowly degrades once again. However, it is clearly seen that when $\alpha$ is optimized, $U_2$ reaches complete outage at $7$ bps/Hz rather than $4$ bps/Hz at fixed $\alpha$ and $U_1$ reaches complete outage at approximately $11$ bps/Hz rather than $3.3$ bps/Hz at the fixed $\alpha$ case.

\begin{figure}[t!]
		\centering
		\includegraphics[width=1.0\columnwidth,height=8cm]{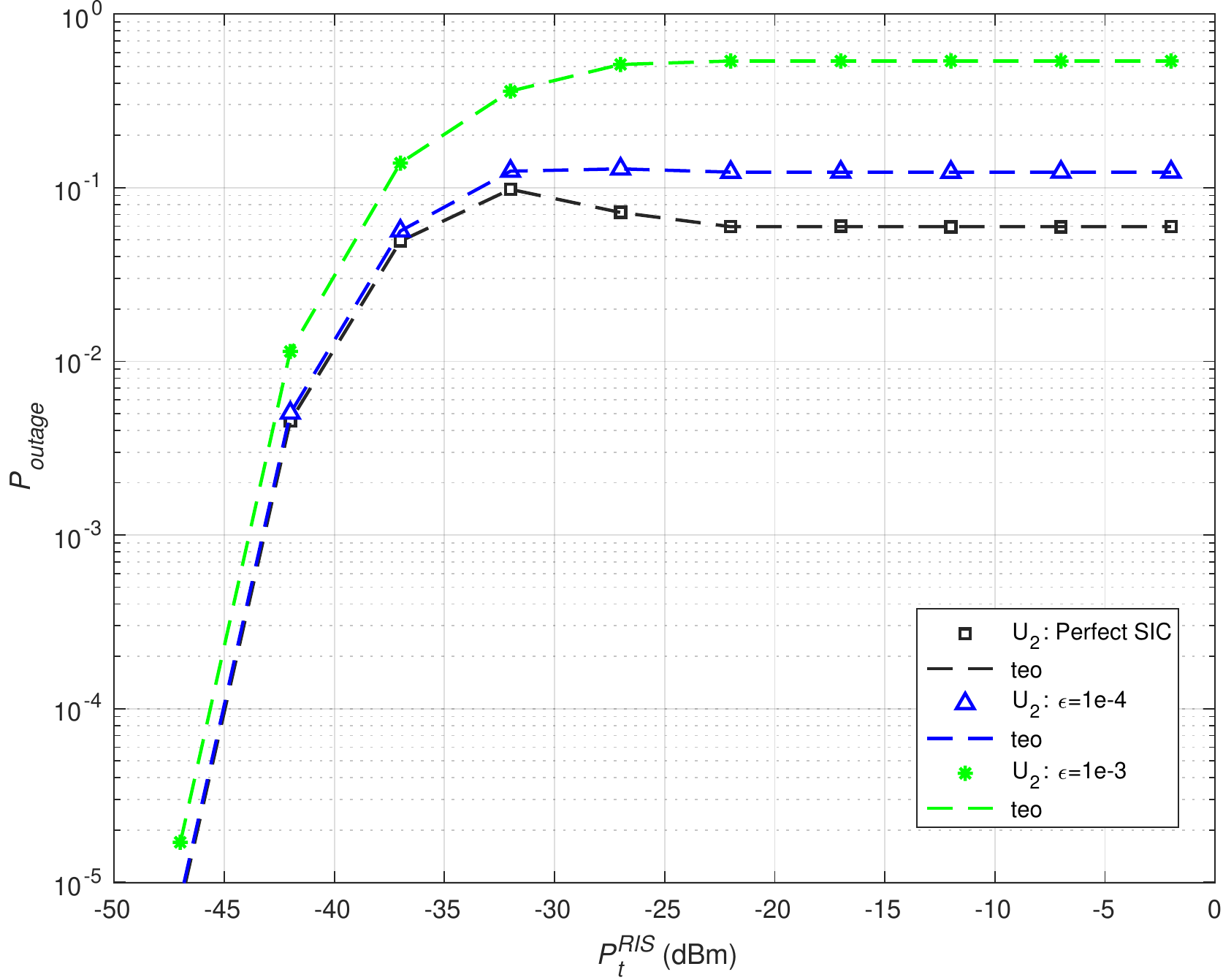}
		\caption{Affect of RIS transmit power with SIC imperfections to outage probability.}
		\label{fig:Fig.7}
\end{figure} 

\begin{figure}[t!]
		\centering
		\includegraphics[width=1.0\columnwidth,height=8cm]{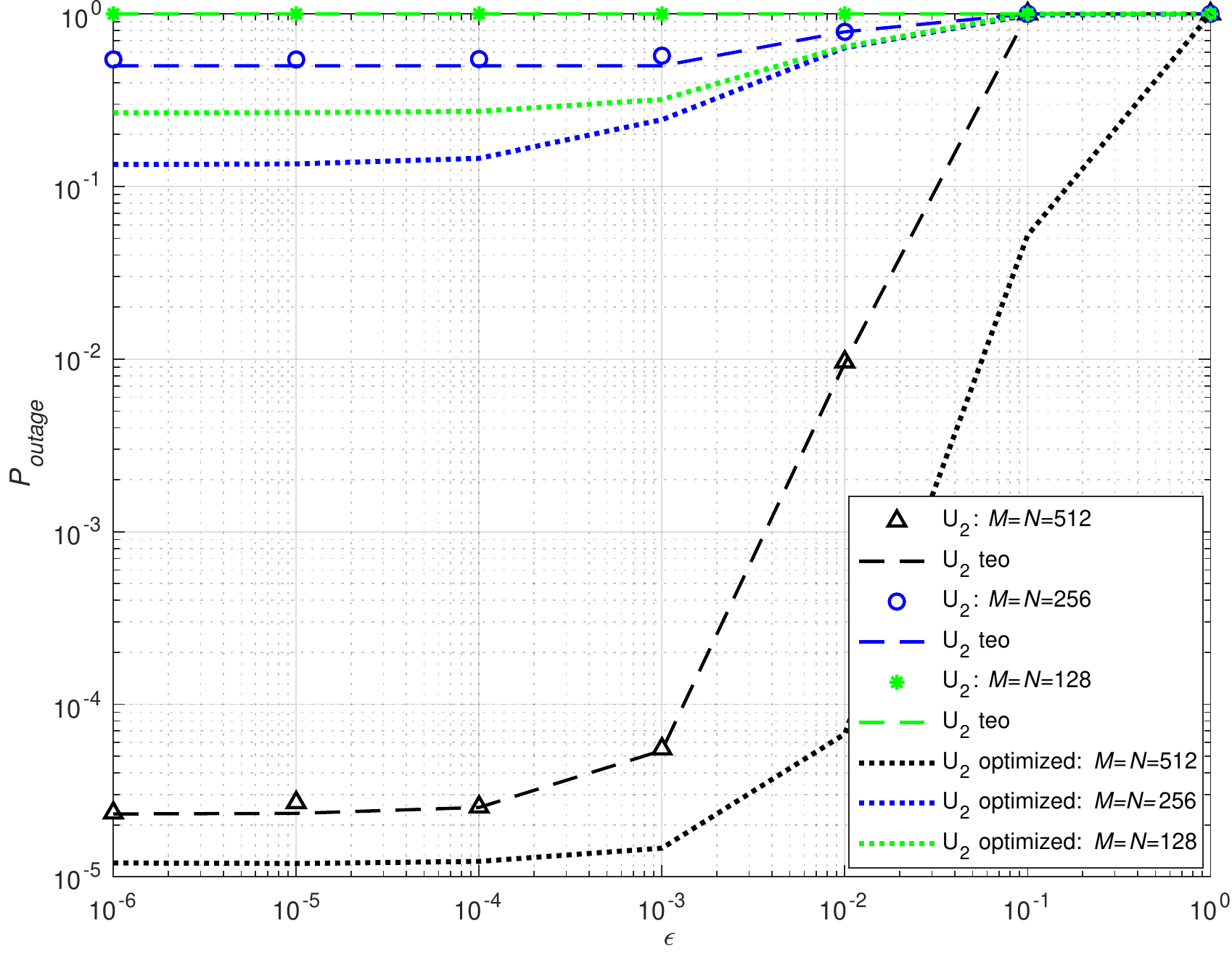}
		\caption{Effect of $\epsilon$ to outage probability}
		\label{fig:Fig.8}
\end{figure}

Fig. \ref{fig:Fig.7} and \ref{fig:Fig.8} present the cases of perfect and imperfect SIC when $U_2$ decodes $U_1$'s signal to subtract it from its received signal. Since $U_1$ is not affected by errors in SIC decoding, it is not included in these figures. In Fig. \ref{fig:Fig.7}, we can see the outage probabilities for $U_2$ at default parameters with different SIC performances ($\epsilon$ values) while the transmit power of the RIS $P_t^\textit{RIS}$ varies. It is shown that perfect SIC ($\epsilon$=0) provides the best performance whereas cases of imperfect SIC results in lower performance. Here, $P_t^\textit{RIS}$ is very low even though an active RIS is used. It should be noted that as the $P_t^\textit{RIS}$ value increases even further, the outage probability would approach to $0$. However, since $\alpha$ is limited to a practical value of $1000$, as we increase the $P_t^\textit{RIS}$, $U_2$'s performance does not improve after getting worse. At low transmit power, $U_2$ experiences great performance since $\alpha$ value is low and $U_2$ is aligned coherently at the passive section which strengthens the performance from the passive part making the active part of the hybrid RIS negligible. However, as $P_t^\textit{RIS}$ increases, the $\alpha$ value increases thus, the gap between the strength of the passive and active parts of the RIS decreases. This decrease in difference of strength from both sides degrades the performance because as the active part of the RIS increases, the random phase alignment strength increases and corrupts the perfectly aligned coherent phase part of the passive part for $U_2$. As the $P_t^\textit{RIS}$ increases the active part will dominate the passive part and the outage probability performance will get better. However, since we limited $\alpha$ to a practical value, we see a saturation in the performance after $-25$ dbm. 

On the other hand, Fig. \ref{fig:Fig.8} presents the effect epsilon has to the outage probability performance of $U_2$ for fixed $P_t^\textit{RIS}$ and with different RIS sizes. As $\epsilon$ is closer to zero, the performance is the highest and get worse as $\epsilon$ approaches higher values meaning higher SIC errors. Once again, higher RIS sizes provides superior performance and the optimized alpha cases outperform the fixed $\alpha$ scenarios. 

%\begin{figure}[h!]
%		\centering
%		\includegraphics[width=1.0\columnwidth,height=7cm]{}
%		\caption{Increasing the channel disparity between the users, start users at same point, fix $U_1$ and move $U_2$ further away each time by 2m (ch disparity, M=N=256, Pt=20 dbm, alp=5, rest default)}
%		\label{fig:Fig.4}
%\end{figure} 

\section{Conclusion}		

The main motivation of this paper is to propose a novel method to achieve power domain NOMA in the uplink, through the use of the hybrid RIS. Conventionally, while the power disparity of the users are implemented at either the users or the base-station, the power difference is performed over-the-air at the RIS. A thorough end-to-end system model along with theoretical calculations are provided and analyzed with various parameters. Future and open research areas include varying the size of the hybrid RIS partitions where they are not equal, further optimization of the alpha parameter, optimal placement of the RIS and other varying system parameters, and the possibility of considering other domains to enhance the NOMA performance.

	\balance

\newpage

	\appendices
	\numberwithin{equation}{section}
	\section{Proof of Lemma \ref{lemma:OPA}}
	\label{AppOP}
Let us define 
\begin{align}
    A=\sqrt{\alpha}\sum_{m=1}^{M}|h_1^m||h_{BS}^m|.
\end{align}
The active part of the RIS aligns coherently to cancel out the phases of the complex Gaussian random variables $h_1^m$ and $h_{BS}^m$ hence, resulting in only the magnitude of the channels $|h_1^m|$ and $|h_{BS}^m|$ which are i.i.d. Rayleigh distributed. Since they are i.i.d. random variables, the mean can be expressed as the product of their individual means as 
 \begin{align}
     \mathrm{E}[|h_1^m||h_{BS}^m|]=\mathrm{E}[|h_1^m|]\mathrm{E}[|h_{BS}^m|].
 \end{align}
 Thus (A.2) equates to
 \begin{align}
     \left(\sigma_{h_1}\sqrt{\frac{\pi}{4}}\right)  \left(\sigma_{h_{BS}}\sqrt{\frac{\pi}{4}}\right) =\frac{\sigma_{h_1}\sigma_{h_{BS}}\pi}{4}.
 \end{align}
 
On the other hand, the variance is calculated using 
 
 \begin{align}
     \mathrm{VAR}[|h_1^m||h_{BS}^m|]=\mathrm{E}[(|h_1^m||h_{BS}^m|)^2]-\mathrm{E}[|h_1^m||h_{BS}^m|]^2.
 \end{align}
 
The calculation of $\mathrm{E}[|h_1^m||h_{BS}^m|]^2$ is simple as we just square (A.3) resulting in
\begin{align}
    \mathrm{E}[|h_1^m||h_{BS}^m|]^2=\cfrac{\sigma_{h_1}^2\sigma_{h_{BS}}^2 \pi^2}{16}.
\end{align}

To obtain $\mathrm{E}[(|h_1^m||h_{BS}^m|)^2]$, which can be re expressed as $\mathrm{E}[|h_1^m|^2]\mathrm{E}[|h_{BS}^m|^2]$, we need to find the individual second moments.

We will find the second moment for $|h_1^m|$ as $\mathrm{E}[|h_1^m|^2]$ since the same can be applied for the latter. The variance will help us to do so:

\begin{align}
    \mathrm{VAR}[|h_1^m|]=\mathrm{E}[|h_1^m|^2]-\mathrm{E}[|h_1^m|]^2.
\end{align}
Reordering allows us to solve for the desired $\mathrm{E}[|h_1^m|^2] $ as
\begin{align}
    \mathrm{E}[|h_1^m|^2]  &=VAR[|h_1^m|]+\mathrm{E}[|h_1^m|^2] \\ 
    &=\cfrac{4\sigma_{h_1}^2}{4}-\cfrac{\pi}{2}\cdot\cfrac{\sigma_{h_1}^2}{2}+\cfrac{\sigma_{h_1}^2\pi}{4} \\
    &=\sigma_{h_1}^2
\end{align}
The same is applied to $\mathrm{E}[|h_{BS}^m|^2]$ resulting in $\sigma_{h_{BS}}^2$.
Thus,  $\mathrm{E}[(|h_1^m||h_{BS}^m|)^2]$ is expressed as 
 \begin{align}
     \mathrm{E}[(|h_1^m||h_{BS}^m|)^2]=\sigma_{h_1}^2\sigma_{h_{BS}}^2.
 \end{align} 
Finally, to obtain the variance of $|h_1^m||h_{BS}^m|$ we substitute (A.5) and (A.10) into (A.4) to obtain
 \begin{align}
    \mathrm{VAR}[|h_1^m||h_{BS}^m|]= \sigma_{h_1}^2\sigma_{h_{BS}}^2(1-\frac{\pi^2}{16}).
 \end{align}
 
 When the number of RIS elements $M$ is significantly large, from the CLT, the random variable $A$ from (A.1) converges to a real Gaussian distributed random variable with $A \sim \mathcal{N}(\frac{\sqrt{\alpha}M\pi\sigma_{h_1}\sigma_{BS}}{4},\,\alpha M\sigma_{h_1}^2\sigma_{BS}^2(1-\frac{\pi^2}{16}))\,$.

 \section{Proof of Lemma \ref{lemma:OPB}}
	\label{AppOP}
 
 For $B$, we have the complex Gaussian random variable as 
 \begin{align}
     B=\sum_{n=1}^{N}g_1^n\beta^n g_{BS}^n.
 \end{align}
 
 The passive part of the RIS $(\beta^n)$ does not align coherently for the channels, hence, the phases do not cancel out and remains a complex Gaussian distributed random variable. Thus the mean remains zero and the variance is the multiplication of the variances of the two i.i.d random variables $g_1^n$ and $g_{BS}^n$ as
 
 \begin{align}
     \mathrm{VAR}[g_1^ng_{BS}^n] = \mathrm{VAR}[g_1^n]\mathrm{VAR}[g_{BS}^n]= (\sigma_{g_1}^2)(\sigma_{g_{BS}}^2).
 \end{align}
 Due to sufficiently large $N$ elements from the RIS, and the CLT, the variance can be finalized as 
 \begin{align}
     \mathrm{VAR}[B]=N\sigma_{g_1}^2\sigma_{g_{BS}}^2.
 \end{align}

 \section{Proof of Lemma \ref{lemma:OPC}}
	\label{AppOP}
	
 Furthermore, the means and variances or the random variables $C$ and $D$ are found similarly as in $A$ and $B$. $C$ is not coherently aligned with the active part of the RIS, therefore the approach in $B$ is exploited. $D$ is coherently aligned with the passive part of the RIS, hence, the identical approach of $A$ is considered.
 
$E$ is simply the multiplication of the phase shift of the RIS element $e^{j\phi_m}$ and $h_{BS}^m$. Since the mean and variance do not change when multiplied with an exponential term, $E$ follows the same distribution as $h_{BS}^m$ which is $\mathcal{CN}(0,\sigma_{BS}^2)$. $F$ is simply the noise term and its distribution has been provided.
  
  \section{Proof of Lemma \ref{lemma:OPD}}
	\label{AppOPD}
 
As it can be seen from (3), $A$ and the $C$ have $h_{BS}^m$ as a common term; however, they are independent since they are uncorrelated and Gaussian distributed. To find their correlation coefficient, the covariance of $A$ and $C$ is required and since $C$ has zero mean, it may be expressed as
\begin{equation}
\mathrm{E}[AC]=\mathrm{E}\left[\left(\sum_{m=1}^{M}h_1^me^{j\theta_m}h_{BS}^m\right)\left(\sum_{m=1}^{M}h_{2}^me^{j\theta_m}h_{BS}^m \right)\right].
\end{equation}
After adjusting the RIS phases according to the first path, (D.1) may be re-expressed as
\begin{equation}
\mathrm{E}[AC]=\mathrm{E}\left[\left(\sum_{m=1}^{M}|h_1^m||h_{BS}^m|\right)\left(\sum_{m=1}^{M}h_{2}^me^{-j\angle{h_2^m}}|h_{BS}^m|\right)\right].
\end{equation}
Here, $h_{1}$ is a complex Gaussian term with zero mean hence, $\mathrm{E}[AC]=0$. Therefore, the correlation coefficient of $A$ and $C$ is also zero resulting in $A$ and $C$ being uncorrelated. Furthermore, they are Gaussian distributed and independent. On the other hand, the same procedure applies for $B$ and $D$ where they share the $g_{BS}^n$ term.

\bibliographystyle{IEEEtran}
\bibliography{IEEEabrv,references}
\end{document}